\documentclass[journal]{IEEEtran}
\usepackage{amsmath,amsfonts,bm,mathrsfs,amssymb,gensymb,multirow,color}
\usepackage{algorithmic}
\usepackage{algorithm}
\usepackage{booktabs}
\usepackage{array}
\usepackage[caption=false,font=normalsize,labelfont=sf,textfont=sf]{subfig}
\usepackage{textcomp}
\usepackage{stfloats}
\usepackage{url}
\usepackage{verbatim}
\usepackage{graphicx}
\usepackage[switch,pagewise]{lineno} 
\usepackage{mathtools}
\usepackage{float}

\graphicspath{{Fig/}}

\usepackage{ntheorem}
\newtheorem{theorem}{Theorem}

\newtheorem*{proof}{Proof}
\newtheorem{proposition}{Proposition}

\def\BibTeX{{\rm B\kern-.05em{\sc i\kern-.025em b}\kern-.08em
    T\kern-.1667em\lower.7ex\hbox{E}\kern-.125emX}}
\usepackage{balance}
\begin{document}

\title{On the Convergence of Belief Propagation for Multipath Data Association in Target Tracking}

\author{Kuilong Yang, Zengfu Wang*, Hua Lan 
\thanks{All authors are with the School of Automation, Northwestern Polytechnical University, Xi'an 710072, China.
This work was supported in part by the National Natural Science Foundation of China~(Grants No.~62473317, U21B2008, 62371398).
\emph{(Corresponding author: Zengfu Wang.)}}
}

\markboth{Journal of \LaTeX\ Class Files,~Vol.~18, No.~9, September~2020}%
{How to Use the IEEEtran \LaTeX \ Templates}

\maketitle

\begin{abstract}
Belief propagation (BP) is widely used for data
association (DA) in target tracking.
Existing convergence analyses of BP for DA address
only the two-way correspondence between targets and
measurements, where each target generates at most one
measurement per scan.
Multipath DA (MPDA) allows a single target to produce
multiple measurements via distinct propagation paths,
creating a three-way correspondence among targets,
paths, and measurements, for which a complete convergence
proof has not yet been provided.
We provide such a proof for the BP updates in MPDA,
establishing convergence to a unique fixed point.
Simulations illustrate the convergence behavior of BP in MPDA and demonstrate a favorable accuracy--efficiency trade-off relative to both single-scan and two-scan variants of the multiple-detection multiple-hypothesis tracker.
\end{abstract}

\begin{IEEEkeywords}
Belief propagation, data association, multipath, convergence, target tracking. 
\end{IEEEkeywords}

\section{Introduction}
Multipath data association (MPDA) is a critical challenge in multipath detection systems (MDS), including skywave over-the-horizon radar (OTHR) \cite{Headrick1974High, Pulford1998, guo2021othr}, passive radars \cite{Tharmarasa2012Multitarget}, urban radar networks \cite{Zhou2016Multiple, Li2014Simultaneous}, and wireless simultaneous localization and mapping \cite{LeitingerMeyerMultipathSLAM, Gao2024SLAM}. 
In MDS, a single point target may generate multiple measurements via different propagation paths. 
The unknown correspondence between targets, propagation paths, and measurements has to be resolved.
Traditional data association (DA) methods, 
including multipath probabilistic data association (PDA) \cite{Pulford1998}, multiple-detection joint PDA (MD-JPDA) \cite{MDJPDA}, and multiple-detection multiple hypothesis tracking (MD-MHT) \cite{MHT2013Sathyan}, may suffer from combinatorial explosion in joint target-measurement-path correspondences, or information loss from probabilistic approximations.


Belief propagation (BP) provides an efficient and scalable
inference framework for MPDA via factor
graphs~\cite{LeitingerMeyerMultipathSLAM, Lan2020Joint,
Lan2021Measurement}.
Since BP is an iterative algorithm, establishing its
convergence is important for reliable deployment.
The prior convergence of BP for two-way DA---where each target
generates at most one measurement per scan---was proved
in~\cite{Williams2014Approximate} using the contraction
mapping theorem.
While~\cite{Lan2020Joint} observed a 
convergence argument for BP in MPDA by treating each 
(target, path) pair as a pseudo-target within the framework
of~\cite{Williams2014Approximate}, a complete convergence 
proof tailored to MPDA has not yet been demonstrated; a detailed discussion is
given in Section~\ref{sec:standardDA}.

In this correspondence, we prove that, for the MPDA
formulation of~\cite{Lan2020Joint}, every execution of
the sum-product BP algorithm converges to a unique
fixed point.
The proof builds on the contraction
lemmas of~\cite{Williams2014Approximate}, which apply
to the MPDA message updates after a straightforward
algebraic reformulation, and establishes convergence
via the Banach fixed-point
theorem~\cite{Banach1922}.
Simulation results demonstrate the convergence behavior of
BP and its favorable tracking accuracy--efficiency trade-off
relative to MD-MHT~\cite{MHT2013Sathyan}.

\section{BP Algorithm For MPDA}
\subsection{Problem Description}\label{Sec:problem}
We follow the description of the MPDA problem in the context of multiple target tracking~(MTT) in \cite{Lan2020Joint}. 
For any positive integer $N$, 
let $[N]$ represent the set $\{1,2,\ldots,N\}$.
Let $x_{i,k}\in \mathbb{R}^{n_x}$ be the kinematic state of target $i$ at time $k$. The discrete-time dynamics for $\mathfrak{X}_k$ independent targets are given by $x_{i,k+1} = f_{i,k}(x_{i,k}) + u_{i,k}$, $i \in [\mathfrak{X}_k]$, where $f_{i,k}(\cdot)$ is the known state transition function and $u_{i,k} \sim \mathcal{N}(\mathbf{0}, Q_{i,k})$ is the zero-mean Gaussian process noise with covariance $Q_{i,k}$. 
At time $k$, the system receives $\mathfrak{Y}_k$ measurements $y_{j,k} \in \mathbb{R}^{n_y}$, $j \in [\mathfrak{Y}_k]$. 
A measurement $j$ originating from target $i$ via an unknown propagation path $\tau \in [\mathfrak{M}_k]$ with detection probability $p_{\text{d}}^{\tau} \in (0,1)$ is modeled as $y_{j,k} = h_{\tau,k}(x_{i,k}) + v_{\tau,k}$, where $h_{\tau,k}(\cdot)$ is the measurement function and $v_{\tau,k} \sim \mathcal{N}(\mathbf{0}, R_{\tau,k})$ is the zero-mean Gaussian measurement noise with covariance $R_{\tau,k}$, and $\mathfrak{M}_k$ is the total number of propagation paths. 
Clutter is uniformly distributed over the surveillance volume $V_k$ with clutter spatial density $\lambda$.
Since each received measurement may be clutter or may originate from an unknown target through an unknown propagation path, the correspondence among targets, propagation paths, and measurements has to be resolved.
Let $X_k=\{ x_{i,k} \} _{i=1}^{\mathfrak{X}_k}$ and $Y_k = \{y_{j,k}\}_{j=1}^{\mathfrak{Y}_k}$ denote the sets of target kinematic states and measurements at time $k$, respectively.
Let $Y_{1:k} = \{Y_1, Y_2, \ldots, Y_k\}$.

Let $A_k \triangleq
(a_k^{i,j,\tau})_{i\in[\mathfrak{X}_k],\,
j\in\{0\}\cup[\mathfrak{Y}_k],\,
\tau\in[\mathfrak{M}_k]}
\cup\,
(a_k^{0,j})_{j\in[\mathfrak{Y}_k]}$
denote an MPDA event at time~$k$.
Here, representing the association variable, $a_{k}^{i,j,\tau}$ and $a_k^{0,j}$ take values in $\{ 0,1 \}$ and signify an association event between target, measurement, and path. Specifically, $a_{k}^{i,j,\tau}$, $i>0$, $j>0$ indicates that measurement $j$ originates from target $i$ via path $\tau$;
$a_{k}^{i,0,\tau}$, $i>0$, signifies that target $i$ is not detected via path $\tau$;
$a_k^{0,j}$, $j>0$, denotes measurement $j$ is clutter, where the index $\tau$ is omitted since clutter does not have an identifiable propagation path.
An MPDA event $A_k$ is feasible if it satisfies the following two constraints \cite{Lan2020Joint,Lan2021Measurement},
\begin{align}
    &\textstyle{\sum_{j = 0}^{\mathfrak{Y}_k}}{a_{k}^{i,j,\tau} = 1}, \quad \forall i \in  [\mathfrak{X}_k], \forall \tau \in  [\mathfrak{M}_k],\label{Eq:FrameCon1}  \\
      &\textstyle{\sum_{i = 1}^{\mathfrak{X}_k}{\sum_{\tau = 1}^{\mathfrak{M}_k}}{a_{k}^{i,j,\tau} + a_k^{0,j}}} = 1,\quad \forall j \in  [\mathfrak{Y}_k].  \label{Eq:FrameCon2}
\end{align}
Let $\mathcal{A}_k$ denote the set of all feasible MPDA 
events satisfying constraints~\eqref{Eq:FrameCon1}--\eqref{Eq:FrameCon2} at time $k$.

We focus on the BP-based MPDA inference module in the
joint detection and tracking based on variational Bayes
(JDT-VB) framework of~\cite{Lan2020Joint}.
In that
framework, the joint detection and tracking algorithm
consists of three coupled modules: MPDA inference (Module~3), target existence state
estimation (Module~2), and target kinematic state
estimation (Module~1), which are updated within an
outer VB loop. Given the current posterior
approximations of target kinematic states and target
existence states, Module~3 constructs the variational
parameters associated with the MPDA variables and
approximately evaluates the corresponding marginal
association probabilities via BP.

More specifically, for each time index~$k$ within an
outer VB iteration, Module~3 operates on the MPDA
event~$A_k$.
Following~\cite[(42)]{Lan2020Joint}, conditioned
on the current variational parameter
vector~$\chi_{k}$, the probability mass function (PMF) of 
the MPDA event takes the form,
\begin{equation}\label{Eq:AssociationPMF}
  q(A_k;\chi_{k})
  = \frac{1}{\mathcal{Z}_k(\chi_{k})}
    \exp\!\bigl(\chi_{k}^{\mathsf{T}}A_k\bigr)
    \,\mathbf{1}_{\mathcal{A}_k}(A_k),
\end{equation}
where~$\chi_{k}^{\mathsf{T}}A_k$ denotes the inner
product between the parameter vector and the stacked
binary association variables,
$\mathcal{Z}_k(\chi_{k})$ is the normalizing
constant,
and~$\mathbf{1}_{\mathcal{A}_k}(A_k)$ restricts~$A_k$
to the feasible set defined
by~\eqref{Eq:FrameCon1}--\eqref{Eq:FrameCon2}.
Following~\cite[(43)]{Lan2020Joint}, the elements
of~$\chi_{k}$ comprise
$\chi_{k}^{i,j,\tau}$ for each
target--measurement--path triplet with
$i\in[\mathfrak{X}_k]$, $j\in[\mathfrak{Y}_k]$, and
$\tau\in[\mathfrak{M}_k]$;
$\chi_{k}^{i,0,\tau}$ for each missed-detection
event with $i\in[\mathfrak{X}_k]$ and
$\tau\in[\mathfrak{M}_k]$; and
$\chi_{k}^{0,j}$ for each clutter event with
$j\in[\mathfrak{Y}_k]$.

In the JDT-VB procedure, $\chi_{k}$ is recalculated
at each outer VB iteration using updated estimates from
Module~1 and Module~2. However, once Module~3 is
entered, $\chi_{k}$ remains fixed throughout the
inner BP execution. 
The convergence result established below applies to each individual 
execution of Module~3, but does not address the convergence of the 
outer VB loop.

\subsection{Factor Graph Modeling}

Following~\cite[(44)]{Lan2020Joint}, the
PMF~\eqref{Eq:AssociationPMF} factorizes as
\begin{equation}\label{Eq:AssociationFactorization}
\begin{aligned}
\!\!\! q(A_k;\chi_{k}) \propto{}&
\prod_{i=1}^{\mathfrak{X}_k}
\prod_{\tau=1}^{\mathfrak{M}_k}
f_{\mathcal{T}_k}(i,\tau)\;
\prod_{j=1}^{\mathfrak{Y}_k}
f_{\mathcal{M}_k}(j) \\
&\times
\prod_{i=1}^{\mathfrak{X}_k}
\prod_{\tau=1}^{\mathfrak{M}_k}
\prod_{j=0}^{\mathfrak{Y}_k}
f_{\mathcal{E}_k}^{i,j,\tau}(a_k^{i,j,\tau})\;
\prod_{j=1}^{\mathfrak{Y}_k}
f_{\mathcal{E}_k}^{0,j}(a_k^{0,j}),
\end{aligned}
\end{equation}
where $f_{\mathcal{T}_k}(i,\tau)$
and $f_{\mathcal{M}_k}(j)$ enforce
\eqref{Eq:FrameCon1} and~\eqref{Eq:FrameCon2},
respectively,
\begin{align}
  f_{\mathcal{T}_k}(i,\tau)
  &= \mathbf{1}\!\Bigl(
       \textstyle\sum_{j=0}^{\mathfrak{Y}_k}
       a_k^{i,j,\tau}=1\Bigr),
  \label{Eq:fT}\\
  f_{\mathcal{M}_k}(j)
  &= \mathbf{1}\!\Bigl(
       \textstyle\sum_{i=1}^{\mathfrak{X}_k}
       \sum_{\tau=1}^{\mathfrak{M}_k}
       a_k^{i,j,\tau}+a_k^{0,j}=1\Bigr),
  \label{Eq:fM}
\end{align}
and $f_{\mathcal{E}_k}^{i,j,\tau}$,
$f_{\mathcal{E}_k}^{0,j}$ encode the local evidence,
\begin{align}
  f_{\mathcal{E}_k}^{i,j,\tau}(a_k^{i,j,\tau})
  &= \exp\!\bigl(
       \chi_{k}^{i,j,\tau}\,a_k^{i,j,\tau}\bigr),
  \label{Eq:fEtarget}\\
  f_{\mathcal{E}_k}^{0,j}(a_k^{0,j})
  &= \exp\!\bigl(
       \chi_{k}^{0,j}\,a_k^{0,j}\bigr).
  \label{Eq:fEclutter}
\end{align}

Fig.~\ref{dataAssociation_Multipath2} illustrates the
subgraph of the factorization~\eqref{Eq:AssociationFactorization} formed by
the DA variables~$a_k^{i,j,\tau}$ and the constraint
factors~$f_{\mathcal{T}_k}(i,\tau)$
and~$f_{\mathcal{M}_k}(j)$; here we focus on the BP
message updates required for the convergence analysis.
The circular nodes represent the DA variables, while the
rectangular nodes represent the constraint factors.
Variables connected to a
common~$f_{\mathcal{T}_k}(i,\tau)$ are grouped within
the blue shaded areas, and those connected to a
common~$f_{\mathcal{M}_k}(j)$ within the green shaded
areas.
In total, the factor graph comprises
$\mathfrak{X}_k\mathfrak{Y}_k\mathfrak{M}_k
+\mathfrak{X}_k\mathfrak{M}_k +\mathfrak{Y}_k$ variable
nodes, interconnected by
$\mathfrak{X}_k\mathfrak{M}_k$
factors~$f_{\mathcal{T}_k}$ and $\mathfrak{Y}_k$
factors~$f_{\mathcal{M}_k}$.
For the complete factor graph for joint detection and
tracking in MDS, the reader is referred
to~\cite{Lan2020Joint}.

\begin{figure}[!b]
\centering
\vspace{-4pt}
\includegraphics[width=0.34\textwidth]{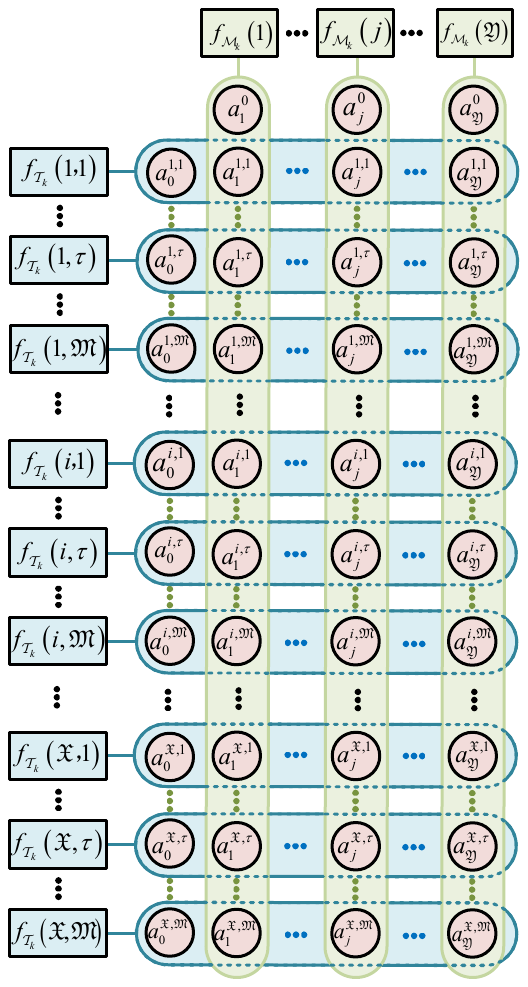}
\caption{The subgraph representing the MPDA constraints. For notational simplicity, the time index $k$ in the variable nodes is omitted, i.e., $a_j^{i,\tau} = a_{k}^{i,j,\tau}$ and $a_j^0 = a_k^{0,j}$. 
Each variable node is also connected to its corresponding evidence factor, i.e., 
$a_k^{i,j,\tau}$ to $f_{\mathcal{E}_k}^{i,j,\tau}$ and $a_k^{0,j}$ to $f_{\mathcal{E}_k}^{0,j}$; 
these evidence factors are omitted here for visual clarity.}
\label{dataAssociation_Multipath2}
\vspace{-8pt}
\end{figure}

\subsection{Simplified BP Algorithm for MPDA Inference}
The BP messages for MPDA inference are derived
in~\cite[(47)--(55)]{Lan2020Joint}. 
We summarize
the simplified scalar ratio form used in this work.

For each evidence factor, the fixed message is
$\overline{\mu}_{\mathcal{E}_k}^{i,j,\tau}
 \triangleq \exp(\chi_{k}^{i,j,\tau})
 \in\mathbb{R}_{++}$
and
$\overline{\mu}_{\mathcal{E}_k}^{0,j}
 \triangleq \exp(\chi_{k}^{0,j})
 \in\mathbb{R}_{++}$,
passed from the evidence factors
$f_{\mathcal{E}_k}^{i,j,\tau}$ and
$f_{\mathcal{E}_k}^{0,j}$ to the corresponding variable
nodes. These messages remain fixed throughout one
execution of the BP iteration.

The iterative constraint messages
$\overline{\mu}_{\mathcal{T}_k}^{i,j,\tau},\,
 \overline{\mu}_{\mathcal{M}_k}^{i,j,\tau}
 \in\mathbb{R}_{++}$,
passed from the constraint factors
$f_{\mathcal{T}_k}(i,\tau)$ and $f_{\mathcal{M}_k}(j)$
to the variable node~$a_k^{i,j,\tau}$, enforce
\eqref{Eq:FrameCon1} and~\eqref{Eq:FrameCon2}.
For all $i\in[\mathfrak{X}_k]$,
$j\in[\mathfrak{Y}_k]$, and
$\tau\in[\mathfrak{M}_k]$, their synchronous updates
are,
\begin{align}
  \overline{\mu}_{\mathcal{T}_k}^{i,j,\tau}
  &= \frac{1}{\overline{\mu}_{\mathcal{E}_k}^{i,0,\tau}
     +\sum_{\substack{j_1=1,j_1\neq j}}^{\mathfrak{Y}_k}
       \overline{\mu}_{\mathcal{E}_k}^{i,j_1,\tau}\,
       \overline{\mu}_{\mathcal{M}_k}^{i,j_1,\tau}},
  \label{Eq:muTk}\\[4pt]
  \overline{\mu}_{\mathcal{M}_k}^{i,j,\tau}
  &= \frac{1}{\overline{\mu}_{\mathcal{E}_k}^{0,j}
     +\sum_{\substack{i_1=1,\tau_1=1\\ (i_1,\tau_1)\neq(i,\tau)}}^{\mathfrak{X}_k,\mathfrak{M}_k}
       \overline{\mu}_{\mathcal{E}_k}^{i_1,j,\tau_1}\,
       \overline{\mu}_{\mathcal{T}_k}^{i_1,j,\tau_1}},
  \label{Eq:muMk}
\end{align}
Upon convergence, the approximate marginal probability
of the association event $a_k^{i,j,\tau}=1$ is,
\begin{equation}\label{Eq:belief}
  b_A(a_k^{i,j,\tau}\!=\!1)
  = \frac{\overline{\mu}_{\mathcal{E}_k}^{i,j,\tau}\,
          \overline{\mu}_{\mathcal{T}_k}^{i,j,\tau}\,
          \overline{\mu}_{\mathcal{M}_k}^{i,j,\tau}}
         {1+\overline{\mu}_{\mathcal{E}_k}^{i,j,\tau}\,
            \overline{\mu}_{\mathcal{T}_k}^{i,j,\tau}\,
            \overline{\mu}_{\mathcal{M}_k}^{i,j,\tau}}.
\end{equation}

Let
$\boldsymbol{\mu}_{\mathcal{E}_k}$ denote the fixed
evidence-message vector collecting all
$\overline{\mu}_{\mathcal{E}_k}^{i,j,\tau}$
($i\in[\mathfrak{X}_k]$,
$j\in\{0\}\cup[\mathfrak{Y}_k]$,
$\tau\in[\mathfrak{M}_k]$) and
$\overline{\mu}_{\mathcal{E}_k}^{0,j}$
($j\in[\mathfrak{Y}_k]$), and let
$\boldsymbol{\mu}_{\mathcal{T}_k}
=(\overline{\mu}_{\mathcal{T}_k}^{i,j,\tau})
_{i=1,j=1,\tau=1}^
{\mathfrak{X}_k,\mathfrak{Y}_k,\mathfrak{M}_k}$
and
$\boldsymbol{\mu}_{\mathcal{M}_k}
=(\overline{\mu}_{\mathcal{M}_k}^{i,j,\tau})
_{i=1,j=1,\tau=1}^
{\mathfrak{X}_k,\mathfrak{Y}_k,\mathfrak{M}_k}$
denote the iterative constraint-message vectors.
Note that
$\dim(\boldsymbol{\mu}_{\mathcal{T}_k})
=\dim(\boldsymbol{\mu}_{\mathcal{M}_k})
=\mathfrak{X}_k\mathfrak{Y}_k\mathfrak{M}_k$,
which is smaller than
$\dim(\boldsymbol{\mu}_{\mathcal{E}_k})$,
reflecting that the missed-detection evidence
$\overline{\mu}_{\mathcal{E}_k}^{i,0,\tau}$ and the
clutter evidence
$\overline{\mu}_{\mathcal{E}_k}^{0,j}$
enter~\eqref{Eq:muTk}--\eqref{Eq:muMk} as fixed
scalars rather than iterative unknowns.
Let
$\boldsymbol{\mu}\in(0,+\infty)
^{\mathfrak{X}_k\mathfrak{Y}_k\mathfrak{M}_k}$
denote a generic message vector with elements
$\overline{\mu}_{i,j,\tau}$ for
$i\in[\mathfrak{X}_k]$, $j\in[\mathfrak{Y}_k]$,
$\tau\in[\mathfrak{M}_k]$, representing either
$\boldsymbol{\mu}_{\mathcal{T}_k}$ or
$\boldsymbol{\mu}_{\mathcal{M}_k}$.
Let
$\boldsymbol{\mu}^{(n)}\triangleq
\bigl((\boldsymbol{\mu}_{\mathcal{T}_k}^{(n)})^\top,\,
(\boldsymbol{\mu}_{\mathcal{M}_k}^{(n)})^\top
\bigr)^\top$
denote the combined message vector at the $n$-th
iteration. The overall BP procedure is summarized in
Algorithm~\ref{alg1}.

\begin{algorithm}[H]
\caption{The BP algorithm for MPDA.}\label{alg1}
\begin{algorithmic}[1]
\REQUIRE $\mathfrak{X}_k$, $\mathfrak{Y}_k$,
  $\mathfrak{M}_k$, fixed evidence messages
  $\boldsymbol{\mu}_{\mathcal{E}_k}$,
  convergence criterion~$\delta$.
\ENSURE Beliefs
  $b_A(a_k^{i,j,\tau}=1)$.
\STATE Initialize all elements of
  $\boldsymbol{\mu}^{(0)}$ in $(0,+\infty)$,
  set $n=1$ and $e>\delta$;
\WHILE{$e>\delta$}
  \FOR{all $i\in[\mathfrak{X}_k]$,
    $j\in[\mathfrak{Y}_k]$,
    $\tau\in[\mathfrak{M}_k]$}
    \STATE Update $\boldsymbol{\mu}^{(n)}$
      via~\eqref{Eq:muTk}--\eqref{Eq:muMk}
      using $\boldsymbol{\mu}^{(n-1)}$;
  \ENDFOR
  \STATE $e\leftarrow
    \|\boldsymbol{\mu}^{(n)}
    -\boldsymbol{\mu}^{(n-1)}\|_\infty$;
  \STATE $n\leftarrow n+1$;
\ENDWHILE
\FOR{all $i\in[\mathfrak{X}_k]$,
  $j\in[\mathfrak{Y}_k]$,
  $\tau\in[\mathfrak{M}_k]$}
  \STATE Compute $b_A(a_k^{i,j,\tau}=1)$
    via~\eqref{Eq:belief};
\ENDFOR
\RETURN $b_A(a_k^{i,j,\tau}=1)$.
\end{algorithmic}
\end{algorithm}

Algorithm~\ref{alg1} restates the inner BP procedure of Module~3 
in the JDT-VB framework~\cite{Lan2020Joint}.
In the present work, 
we consider each execution of this inner BP loop, in which 
$\boldsymbol{\mu}_{\mathcal{E}_k}$ is treated as a fixed strictly 
positive vector and only $\boldsymbol{\mu}_{\mathcal{T}_k}$ and 
$\boldsymbol{\mu}_{\mathcal{M}_k}$ are iteratively updated. Therefore, 
the convergence analysis in Section~\ref{section:multipath_convergence} 
concerns this inner BP execution, independently of the outer JDT-VB 
updates or any other tracking or detection framework used to obtain 
$\boldsymbol{\mu}_{\mathcal{E}_k}$.

The computational cost of Algorithm~\ref{alg1} is
dominated by the iterative message updates. At each
iteration, the messages
$\overline{\mu}_{\mathcal{T}_k}^{i,j,\tau}$ and
$\overline{\mu}_{\mathcal{M}_k}^{i,j,\tau}$ must be
computed for each of the
$\mathfrak{X}_k\mathfrak{Y}_k\mathfrak{M}_k$
non-null variable nodes. Letting $r_{\mathrm{lbp}}$
denote the number of iterations, the overall
complexity is
$C_{\mathrm{LBP}}
=\mathcal{O}(r_{\mathrm{lbp}}\,
\mathfrak{X}_k\mathfrak{Y}_k\mathfrak{M}_k)$.

\section{Convergence Analysis of BP for MPDA}
\label{section:multipath_convergence}

Although Fig.~\ref{dataAssociation_Multipath2} contains loops, 
we prove that Algorithm~\ref{alg1} converges to a unique fixed point.

\subsection{Structural Relation to Two-Way DA}
\label{sec:standardDA}

Two-way DA refers to the one-to-one correspondence between
targets and measurements, where each target generates at most
one measurement per scan and each measurement originates from
at most one target~\cite{Williams2014Approximate}.
MPDA extends this to a three-way correspondence among
targets, measurements, and propagation paths.
In the MPDA considered here, a target may generate multiple
measurements through distinct propagation paths, while at most
one measurement is generated through each propagation path of
each target.
Therefore, MPDA reduces to two-way DA only when
$\mathfrak{M}_k=1$.

In each execution of BP for MPDA, the evidence
messages~$\boldsymbol{\mu}_{\mathcal{E}_k}$ are fixed,
whereas only the constraint messages
$\boldsymbol{\mu}_{\mathcal{T}_k}$ and
$\boldsymbol{\mu}_{\mathcal{M}_k}$ are updated
iteratively
through~\eqref{Eq:muTk}--\eqref{Eq:muMk}.
Remark~1 of~\cite{Lan2020Joint} observed that the
convergence of BP in MPDA can be related to the
two-way DA analysis
in~\cite{Williams2014Approximate} by treating each
(target, path) pair as a pseudo-target.
However, Remark in~\cite{Lan2020Joint} was not
accompanied by an explicit convergence theorem or a
complete proof for the MPDA message
updates~\eqref{Eq:muTk}--\eqref{Eq:muMk}.
By reexamining this pseudo-target argument more closely,
while the contraction property of each message update
can be verified on compact subsets
via~\cite[Lemma~1 and Lemma~2]{Williams2014Approximate},
the explicit construction of a positively invariant
compact subset from which the Banach fixed-point theorem
can be applied is not straightforward and was not
provided in~\cite{Lan2020Joint}.
The purpose of the analysis below is to provide such a
proof for Algorithm~\ref{alg1}.

Specifically, based on the structural relation described
above, Proposition~\ref{prop:contraction} applies the
contraction results
of~\cite[Lemma~1 and Lemma~2]{Williams2014Approximate}
to show that the MPDA message updates~$g(\cdot)$
and~$h(\cdot)$ are strict contractions on compact subsets of~$(0,+\infty)^{\mathfrak{X}_k\mathfrak{Y}_k\mathfrak{M}_k}$, with contraction factors that
depend on the subset.
However, the contraction property on a compact subset
does not by itself imply convergence from an arbitrary
strictly positive initialization, since one must
additionally identify a positively invariant
compact subset that (i)~is entered by the message
sequence after finitely many iterations and (ii)~on
which the contraction holds.
For two-way DA, \cite{Williams2014Approximate} noted that the message iterates are contained in a compact subset, but did not provide its explicit construction. For the original MPDA message updates in~\eqref{Eq:muTk}--\eqref{Eq:muMk}, the corresponding positively invariant compact subset has not been established in \cite{Lan2020Joint}.
Theorem~\ref{Theorem1} below explicitly constructs this
subset with concrete bounds and applies the Banach
fixed-point theorem to establish convergence of BP for
MPDA.

We next define the metric space and derive the contraction
conditions required for Theorem~\ref{Theorem1}.

\subsection{Metric Space Formulation}

For notational simplicity, the time index~$k$ is
omitted in what follows. Recall that the evidence-message
vector~$\boldsymbol{\mu}_{\mathcal{E}}$ serves as a fixed
input. Therefore, the message
$\overline{\mu}_{\mathcal{T}}^{i,j,\tau}$ depends only on
the previous $\overline{\mu}_{\mathcal{M}}^{i,j,\tau}$ and
vice versa in~\eqref{Eq:muTk}--\eqref{Eq:muMk}. Let
$\boldsymbol{\mu}^{\mathcal{T}}
=\boldsymbol{g}(\boldsymbol{\mu}^{\mathcal{M}})$ and
$\boldsymbol{\mu}^{\mathcal{M}}
=\boldsymbol{h}(\boldsymbol{\mu}^{\mathcal{T}})$
represent the message updates
in~\eqref{Eq:muTk} and~\eqref{Eq:muMk} in vector form.
The domains of both $\boldsymbol{g}(\cdot)$ and
$\boldsymbol{h}(\cdot)$ are $(0,\infty)^t$, where
$t=\mathfrak{X}_k\mathfrak{Y}_k\mathfrak{M}_k$.
Accordingly, their ranges are also $(0,\infty)^t$.

Consider a metric space
$\mathcal{X}=(0,+\infty)^t$ equipped with a distance
metric $d:\mathcal{X}\times\mathcal{X}\to[0,+\infty)$.
A function $f:\mathcal{X}\to\mathcal{X}$ is called a
contraction mapping if there exists
$\alpha\in[0,1)$ such that
$d(f(x),f(y))\le\alpha\,d(x,y)$ for all
$x,y\in\mathcal{X}$. Moreover, if $\mathcal{X}$ is
complete, then any sequence resulting from repeated
application of~$f$ converges to a unique fixed
point~\cite{keeler1969theorem}.

While Algorithm~\ref{alg1} uses the $L_\infty$ norm as
a stopping criterion, establishing the contraction
property requires an appropriate metric. Inspired
by~\cite{Williams2014Approximate}, we define the
logarithmic distance,
\begin{equation}\label{Eq:logdist}
d(\boldsymbol{\mu},\tilde{\boldsymbol{\mu}})
= \max_{i,j,\tau}
  \Bigl|\log\frac{\overline{\mu}_{i,j,\tau}}
  {\tilde{\overline{\mu}}_{i,j,\tau}}\Bigr|.
\end{equation}
One can verify that $d(\cdot,\cdot)$ is a valid distance
metric. As shown in Theorem~\ref{Theorem1}, the
messages enter and remain in a compact subset of
$(0,+\infty)^t$ during iteration. On such a subset,
$d(\boldsymbol{\mu},\tilde{\boldsymbol{\mu}})\to 0$ if
and only if
$\|\boldsymbol{\mu}-\tilde{\boldsymbol{\mu}}\|_\infty
\to 0$, ensuring consistency with the stopping
criterion in Algorithm~\ref{alg1}.

\subsection{Contraction Properties of Message Updates}

We show that $\boldsymbol{g}(\cdot)$ and
$\boldsymbol{h}(\cdot)$
in~\eqref{Eq:muTk}--\eqref{Eq:muMk} are contraction
mappings by algebraically reformulating them into the
canonical fractional form of~\cite{Williams2014Approximate},
enabling direct application
of~\cite[Lemma~1 and Lemma~2]{Williams2014Approximate}.

\begin{proposition}\label{prop:contraction}
Consider the distance metric $d(\cdot,\cdot)$ defined
in~\eqref{Eq:logdist}, and let
$\alpha(L,c)=\frac{\log\frac{1+cL}{1+c}}{\log L}$
denote the contraction factor
from~\cite[Lemma~1]{Williams2014Approximate}, defined
for $L>1$ and $c>0$, with
$\alpha(L,c)\in(0,1)$ and monotonically increasing
in~$L$. For any compact subsets
$\Omega_{\mathcal{M}}=[L_{\mathcal{M}},U_{\mathcal{M}}]^t
\subset\mathcal{X}$ and
$\Omega_{\mathcal{T}}=[L_{\mathcal{T}},U_{\mathcal{T}}]^t
\subset\mathcal{X}$, define
$\bar{L}_{\mathcal{M}}\triangleq
U_{\mathcal{M}}/L_{\mathcal{M}}>1$ and
$\bar{L}_{\mathcal{T}}\triangleq
U_{\mathcal{T}}/L_{\mathcal{T}}>1$. Then, for all
$\boldsymbol{\mu}^{\mathcal{M}},
\tilde{\boldsymbol{\mu}}^{\mathcal{M}}
\in\Omega_{\mathcal{M}}$ and
$\boldsymbol{\mu}^{\mathcal{T}},
\tilde{\boldsymbol{\mu}}^{\mathcal{T}}
\in\Omega_{\mathcal{T}}$,
\begin{align}
d\bigl(\boldsymbol{g}(\boldsymbol{\mu}^{\mathcal{M}}),
  \boldsymbol{g}(\tilde{\boldsymbol{\mu}}^{\mathcal{M}})
  \bigr)
&\le\alpha(\bar{L}_{\mathcal{M}},C_{\mathcal{M}}^*)\,
  d(\boldsymbol{\mu}^{\mathcal{M}},
  \tilde{\boldsymbol{\mu}}^{\mathcal{M}}),
  \label{Eq:g_contraction}\\
d\bigl(\boldsymbol{h}(\boldsymbol{\mu}^{\mathcal{T}}),
  \boldsymbol{h}(\tilde{\boldsymbol{\mu}}^{\mathcal{T}})
  \bigr)
&\le\alpha(\bar{L}_{\mathcal{T}},C_{\mathcal{T}}^*)\,
  d(\boldsymbol{\mu}^{\mathcal{T}},
  \tilde{\boldsymbol{\mu}}^{\mathcal{T}}),
  \label{Eq:h_contraction}
\end{align}
where
\begin{align}
C_{\mathcal{M}}^*
&\triangleq\max_{i,j,\tau}\max_{\boldsymbol{\mu}^{\mathcal{M}}
  \in\Omega_{\mathcal{M}}}
  c_{i,j,\tau}^{\mathcal{M}}(\boldsymbol{\mu}^{\mathcal{M}}),
  \label{Eq:CM}\\
C_{\mathcal{T}}^*
&\triangleq\max_{i,j,\tau}\max_{\boldsymbol{\mu}^{\mathcal{T}}
  \in\Omega_{\mathcal{T}}}
  c_{i,j,\tau}^{\mathcal{T}}(\boldsymbol{\mu}^{\mathcal{T}}),
  \label{Eq:CT}
\end{align}
with
$c_{i,j,\tau}^{\mathcal{M}}(\boldsymbol{\mu}^{\mathcal{M}})
=\frac{1}{\overline{\mu}_{\mathcal{E}}^{i,0,\tau}}
\sum_{\substack{j_1=1\\j_1\neq j}}^{\mathfrak{Y}_k}
\overline{\mu}_{\mathcal{E}}^{i,j_1,\tau}\,
\overline{\mu}_{\mathcal{M}}^{i,j_1,\tau}$
and
$c_{i,j,\tau}^{\mathcal{T}}(\boldsymbol{\mu}^{\mathcal{T}})
=\frac{1}{\overline{\mu}_{\mathcal{E}}^{0,j}}
\sum_{\substack{i_1=1,\tau_1=1\\ (i_1,\tau_1)\neq(i,\tau)}}^{\mathfrak{X}_k,\mathfrak{M}_k}
\overline{\mu}_{\mathcal{E}}^{i_1,j,\tau_1}\,
\overline{\mu}_{\mathcal{T}}^{i_1,j,\tau_1}$.
\end{proposition}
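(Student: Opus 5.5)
The plan is to rewrite each component of $\boldsymbol{g}$ and $\boldsymbol{h}$ in the canonical fractional form of~\cite{Williams2014Approximate}, and then apply~\cite[Lemma~1 and Lemma~2]{Williams2014Approximate} componentwise before taking the maximum over $(i,j,\tau)$.

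For \eqref{Eq:muTk}, divide the numerator and denominator by the fixed positive scalar $\overline{\mu}_{\mathcal{E}}^{i,0,\tau}$:
\begin{equation*}
g_{i,j,\tau}(\boldsymbol{\mu}^{\mathcal{M}})
=\frac{1/\overline{\mu}_{\mathcal{E}}^{i,0,\tau}}
{1+\sum_{j_1\neq j} w_{i,j_1,\tau}\,\overline{\mu}_{\mathcal{M}}^{i,j_1,\tau}},
\qquad w_{i,j_1,\tau}\triangleq
\frac{\overline{\mu}_{\mathcal{E}}^{i,j_1,\tau}}{\overline{\mu}_{\mathcal{E}}^{i,0,\tau}}>0 .
\end{equation*}
The constant prefactor cancels in the log-ratio, so
$\bigl|\log\bigl(g_{i,j,\tau}(\boldsymbol{\mu}^{\mathcal{M}})/g_{i,j,\tau}(\tilde{\boldsymbol{\mu}}^{\mathcal{M}})\bigr)\bigr|
=\bigl|\log\bigl((1+c)/(1+\tilde c)\bigr)\bigr|$,
with $c=c^{\mathcal{M}}_{i,j,\tau}(\boldsymbol{\mu}^{\mathcal{M}})$ and $\tilde c=c^{\mathcal{M}}_{i,j,\tau}(\tilde{\boldsymbol{\mu}}^{\mathcal{M}})$. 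This is exactly the structure treated in~\cite{Williams2014Approximate}, with a pseudo-target $(i,\tau)$ in place of a target. The same normalization by $\overline{\mu}_{\mathcal{E}}^{0,j}$ turns \eqref{Eq:muMk} into the same form with $c^{\mathcal{T}}_{i,j,\tau}$. It then suffices to bound one generic map $\phi(\boldsymbol{x})=1/(1+\sum_\ell w_\ell x_\ell)$ on a box $[L,U]^m$.

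\textbf{Step 1: reduce to a scalar bound.} Set $\epsilon=d(\boldsymbol{x},\tilde{\boldsymbol{x}})$. Each ratio $x_\ell/\tilde x_\ell$ then lies in $[e^{-\epsilon},e^{\epsilon}]$. Since the $w_\ell$ are positive, a mediant-type argument gives $\tilde c\,e^{-\epsilon}\le c\le \tilde c\,e^{\epsilon}$ (Lemma~2 of~\cite{Williams2014Approximate}).

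\textbf{Step 2: bound the log-ratio.} Assume w.l.o.g.\ $c\ge\tilde c$. Then $\log\frac{1+c}{1+\tilde c}\le \log\frac{1+\tilde c e^{\epsilon}}{1+\tilde c}$. Lemma~1 of~\cite{Williams2014Approximate} states that $\log\frac{1+\tilde c e^{\epsilon}}{1+\tilde c}\le \alpha(e^{\epsilon_{\max}},\tilde c)\,\epsilon$ whenever $\epsilon\le\epsilon_{\max}$. This holds because $\epsilon\mapsto\log(1+\tilde c e^\epsilon)$ is convex and vanishes-adjusted at $0$, so the chord over $[0,\epsilon_{\max}]$ dominates it. On $\Omega_{\mathcal{M}}$ we have $\epsilon\le\log\bar L_{\mathcal{M}}$, so $e^{\epsilon_{\max}}=\bar L_{\mathcal{M}}$.

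\textbf{Step 3: take the supremum.} Use that $\alpha(L,c)$ is increasing in $c$, which should be checked from the formula: the derivative of $\log(1+cL)-\log(1+c)$ in $c$ is $\frac{L-1}{(1+cL)(1+c)}>0$. Replacing $\tilde c$ by $C_{\mathcal{M}}^*$ gives a uniform factor $\alpha(\bar L_{\mathcal{M}},C^*_{\mathcal{M}})<1$. The maximum in \eqref{Eq:CM} is attained by compactness and continuity. Taking the maximum over $(i,j,\tau)$ then yields \eqref{Eq:g_contraction}, and \eqref{Eq:h_contraction} follows identically.

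The main obstacle I expect is bookkeeping the direction of the inequalities. The relevant $c$ could be either the $c$ of $\boldsymbol{\mu}$ or that of $\tilde{\boldsymbol{\mu}}$, and one must verify that monotonicity in $c$ lets the worst case over the box be captured by $C^*$. A further subtlety arises if some $c^{\mathcal{M}}_{i,j,\tau}$ is degenerate, for example $\mathfrak{Y}_k=1$ with an empty sum; then $g$ is constant and the bound holds trivially.
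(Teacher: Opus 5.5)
Your proposal is correct and follows essentially the same route as the paper. You normalize \eqref{Eq:muTk}--\eqref{Eq:muMk} by $\overline{\mu}_{\mathcal{E}}^{i,0,\tau}$ and $\overline{\mu}_{\mathcal{E}}^{0,j}$ to reach the canonical form of~\cite{Williams2014Approximate}, apply Lemmas~1--2 there, and pass from $\tilde c$ to $C_{\mathcal{M}}^*$ using the monotonicity of $\alpha$ in $c$ and compactness of the box; the only difference is that you unpack those lemmas yourself (mediant bound, then chord bound for the convex map $\epsilon\mapsto\log\frac{1+\tilde c e^{\epsilon}}{1+\tilde c}$ with $\epsilon_{\max}=\log\bar{L}_{\mathcal{M}}$) rather than citing them, and your w.l.o.g.\ choice $c\ge\tilde c$ and treatment of the empty-sum case are both sound.
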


\begin{proof}
Dividing numerator and denominator
of~\eqref{Eq:muTk}--\eqref{Eq:muMk} by the strictly
positive constants
$\overline{\mu}_{\mathcal{E}}^{i,0,\tau}$ and
$\overline{\mu}_{\mathcal{E}}^{0,j}$ yields
\begin{align*}
g_{i,j,\tau}(\boldsymbol{\mu}^{\mathcal{M}})
&=\frac{1/\overline{\mu}_{\mathcal{E}}^{i,0,\tau}}
  {1+c_{i,j,\tau}^{\mathcal{M}}
  (\boldsymbol{\mu}^{\mathcal{M}})},& \!\!\!
h_{i,j,\tau}(\boldsymbol{\mu}^{\mathcal{T}})
&=\frac{1/\overline{\mu}_{\mathcal{E}}^{0,j}}
  {1+c_{i,j,\tau}^{\mathcal{T}}
  (\boldsymbol{\mu}^{\mathcal{T}})},
\end{align*}
which are of the canonical fractional form of~\cite[(20)]{Williams2014Approximate}.

Moreover, for any $\boldsymbol{\mu}^{\mathcal{M}},\tilde{\boldsymbol{\mu}}^{\mathcal{M}}\in\Omega_{\mathcal{M}}$, we have $d(\boldsymbol{\mu}^{\mathcal{M}},\tilde{\boldsymbol{\mu}}^{\mathcal{M}})\le \log(U_{\mathcal{M}}/L_{\mathcal{M}})=\log\bar{L}_{\mathcal{M}}$.
Similarly, for any $\boldsymbol{\mu}^{\mathcal{T}},\tilde{\boldsymbol{\mu}}^{\mathcal{T}}\in\Omega_{\mathcal{T}}$, $d(\boldsymbol{\mu}^{\mathcal{T}},\tilde{\boldsymbol{\mu}}^{\mathcal{T}})\le \log(U_{\mathcal{T}}/L_{\mathcal{T}})=\log\bar{L}_{\mathcal{T}}$. 
Since $c_{i,j,\tau}^{\mathcal{M}}(\cdot)$ and $c_{i,j,\tau}^{\mathcal{T}}(\cdot)$ are continuous, 
$\Omega_{\mathcal{M}}$ and $\Omega_{\mathcal{T}}$ are compact, the maxima in \eqref{Eq:CM}--\eqref{Eq:CT} are attained and finite.
For fixed $L>1$, $\alpha(L,c)$ is strictly increasing in $c$, since $\frac{\partial}{\partial c}\log\frac{1+cL}{1+c}=\frac{L-1}{(1+cL)(1+c)}>0$. Applying \cite[Lemma~1 and Lemma~2]{Williams2014Approximate}, together with this monotonicity in $c$, yields \eqref{Eq:g_contraction} and \eqref{Eq:h_contraction}, where both $\alpha(\bar{L}_{\mathcal{M}},C_{\mathcal{M}}^*)$ and $\alpha(\bar{L}_{\mathcal{T}},C_{\mathcal{T}}^*)$ are strictly less than one, completing the proof.

\end{proof}

\subsection{Convergence Theorem}

Building upon Proposition~\ref{prop:contraction}, we now establish the 
convergence of the loopy BP updates in Algorithm~\ref{alg1}.

\begin{theorem}\label{Theorem1}
Consider the message space
$\mathcal{X}=(0,+\infty)^t$ with the logarithmic
distance~\eqref{Eq:logdist}. Define the update mapping
\begin{equation}\label{Eq:F}
\boldsymbol{F}(\boldsymbol{\mu}^{\mathcal{M}},
\boldsymbol{\mu}^{\mathcal{T}})
\triangleq\bigl(\boldsymbol{h}(\boldsymbol{\mu}^{\mathcal{T}}),\,
\boldsymbol{g}(\boldsymbol{\mu}^{\mathcal{M}})\bigr).
\end{equation}
Then, for any initialization
$(\boldsymbol{\mu}^{\mathcal{M},(0)},
\boldsymbol{\mu}^{\mathcal{T},(0)})
\in\mathcal{X}\times\mathcal{X}$, the BP updates
induced by~$\boldsymbol{F}$ converge to a unique fixed
point.
\end{theorem}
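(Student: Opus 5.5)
The plan is to build an explicit compact box that the iterates enter after at most two applications of $\boldsymbol{F}$ and never leave, to show via Proposition~\ref{prop:contraction} that $\boldsymbol{F}$ is a strict contraction on that box, and then to apply the Banach fixed-point theorem.

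\textbf{Step 1: the first iteration gives uniform upper bounds.} Since every summand in the denominators of \eqref{Eq:muTk}--\eqref{Eq:muMk} is positive, for every strictly positive input we have
\[
0<g_{i,j,\tau}(\boldsymbol{\mu}^{\mathcal{M}})\le U_{\mathcal{T}}\triangleq\max_{i,\tau}\frac{1}{\overline{\mu}_{\mathcal{E}}^{i,0,\tau}},\qquad
0<h_{i,j,\tau}(\boldsymbol{\mu}^{\mathcal{T}})\le U_{\mathcal{M}}\triangleq\max_{j}\frac{1}{\overline{\mu}_{\mathcal{E}}^{0,j}} .
\]
Hence $\boldsymbol{F}(\mathcal{X}\times\mathcal{X})\subset(0,U_{\mathcal{M}}]^t\times(0,U_{\mathcal{T}}]^t$.

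\textbf{Step 2: the second iteration gives lower bounds.} Once the inputs are bounded above, the denominators are bounded above as well. For $\boldsymbol{\mu}^{\mathcal{M}}\in(0,U_{\mathcal{M}}]^t$ this gives
\[
g_{i,j,\tau}\ge L_{\mathcal{T}}\triangleq\min_{i,j,\tau}\Bigl(\overline{\mu}_{\mathcal{E}}^{i,0,\tau}+U_{\mathcal{M}}\textstyle\sum_{j_1\neq j}\overline{\mu}_{\mathcal{E}}^{i,j_1,\tau}\Bigr)^{-1}>0 .
\]
Similarly,
\[
h_{i,j,\tau}\ge L_{\mathcal{M}}\triangleq\min_{i,j,\tau}\Bigl(\overline{\mu}_{\mathcal{E}}^{0,j}+U_{\mathcal{T}}\textstyle\sum_{(i_1,\tau_1)\neq(i,\tau)}\overline{\mu}_{\mathcal{E}}^{i_1,j,\tau_1}\Bigr)^{-1}>0 .
\]
Set $\Omega\triangleq\Omega_{\mathcal{M}}\times\Omega_{\mathcal{T}}$ with $\Omega_{\mathcal{M}}=[L_{\mathcal{M}},U_{\mathcal{M}}]^t$ and $\Omega_{\mathcal{T}}=[L_{\mathcal{T}},U_{\mathcal{T}}]^t$. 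Then $\boldsymbol{F}^{2}(\mathcal{X}\times\mathcal{X})\subset\Omega$. Since $\Omega\subset(0,U_{\mathcal{M}}]^t\times(0,U_{\mathcal{T}}]^t$, Step~2 also gives $\boldsymbol{F}(\Omega)\subset\Omega$, so $\Omega$ is positively invariant. If $U=L$ for some component, that component is constant, so the box can be enlarged slightly to make $\bar{L}>1$ as Proposition~\ref{prop:contraction} requires. To do this, take any larger upper bound, for instance by multiplying $U$ by $2$; invariance is preserved because the bounds stay monotone.

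\textbf{Step 3: contraction on $\Omega$.} Use the product metric
\[
D\bigl((\boldsymbol{\mu}^{\mathcal{M}},\boldsymbol{\mu}^{\mathcal{T}}),(\tilde{\boldsymbol{\mu}}^{\mathcal{M}},\tilde{\boldsymbol{\mu}}^{\mathcal{T}})\bigr)=\max\bigl\{d(\boldsymbol{\mu}^{\mathcal{M}},\tilde{\boldsymbol{\mu}}^{\mathcal{M}}),\,d(\boldsymbol{\mu}^{\mathcal{T}},\tilde{\boldsymbol{\mu}}^{\mathcal{T}})\bigr\}.
\]
By \eqref{Eq:g_contraction}--\eqref{Eq:h_contraction}, $\boldsymbol{F}$ is a contraction on $\Omega$ with factor $\alpha^*=\max\{\alpha(\bar{L}_{\mathcal{M}},C_{\mathcal{M}}^*),\alpha(\bar{L}_{\mathcal{T}},C_{\mathcal{T}}^*)\}<1$. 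The set $\Omega$ is closed and bounded away from $0$, and the map $\log$ sends it to a product of closed intervals under the sup metric. It is therefore complete with respect to $D$.

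\textbf{Step 4: Banach fixed-point theorem.} By the Banach fixed-point theorem, $\boldsymbol{F}$ has a unique fixed point $\boldsymbol{\mu}^*$ in $\Omega$, and the iterates $\boldsymbol{F}^{n}(\boldsymbol{F}^2(\boldsymbol{\mu}^{(0)}))$ converge to it for every initialization.

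\textbf{Step 5: uniqueness on all of $\mathcal{X}\times\mathcal{X}$.} Any fixed point $\boldsymbol{\nu}$ satisfies $\boldsymbol{\nu}=\boldsymbol{F}^2(\boldsymbol{\nu})\in\Omega$, so $\boldsymbol{\nu}=\boldsymbol{\mu}^*$. The fixed point is therefore unique on the whole space, independently of the initialization.

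The main obstacle is Step~2. The lower bounds must be uniform and must not depend on the initialization, which forces the argument to use the two-step structure: upper bounds come first, and lower bounds follow from them. Some care is also needed with the degenerate cases where sums over $j_1\neq j$ or $(i_1,\tau_1)\neq(i,\tau)$ are empty, for example $\mathfrak{Y}_k=1$. In those cases the message is constant, the bound holds trivially, and the slight enlargement of the box in Step~2 handles the requirement $\bar{L}>1$.
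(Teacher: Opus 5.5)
Your proposal is correct and follows essentially the same route as the paper's proof. Both arguments take upper bounds $U_{\mathcal{T}}=1/\kappa_{m,\min}$ and $U_{\mathcal{M}}=1/\kappa_{c,\min}$ after one update and lower bounds after the second, show the box $\Omega_{\mathcal{M}}\times\Omega_{\mathcal{T}}$ is positively invariant, get a contraction in the max product metric from Proposition~\ref{prop:contraction}, and conclude with Banach; you differ only in sharper lower bounds, an explicit global-uniqueness step, and absorbing the degenerate cases $\mathfrak{Y}_k=1$ or $\mathfrak{X}_k\mathfrak{M}_k=1$ by enlarging the box (the paper treats them separately), where halving $L$ is cleaner, since doubling $U_{\mathcal{T}}$ without recomputing $L_{\mathcal{M}}$ can break $\boldsymbol{h}(\Omega_{\mathcal{T}})\subseteq\Omega_{\mathcal{M}}$.
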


\begin{proof}
We first consider the degenerate cases. If
$\mathfrak{Y}_k=1$, the sum term
in~\eqref{Eq:muTk} is empty for every $(i,\tau)$, so
$\overline{\mu}_{\mathcal{T}}^{i,1,\tau}
=1/\overline{\mu}_{\mathcal{E}}^{i,0,\tau}$ is
constant; substituting it into~\eqref{Eq:muMk} then
yields a constant
$\overline{\mu}_{\mathcal{M}}^{i,1,\tau}$, and
convergence is immediate. The case
$\mathfrak{X}_k\mathfrak{M}_k=1$ follows by a
symmetric argument applied to~\eqref{Eq:muMk}.
It remains to consider the nondegenerate case $\mathfrak{Y}_k\geq 2,
\mathfrak{X}_k\mathfrak{M}_k\geq 2$.

We first show that $(\mathcal{X},d)$ is complete. Define the 
elementwise logarithmic mapping $\varphi(\boldsymbol{\mu})=\log\boldsymbol{\mu}
\in\mathbb{R}^t$. Then, by \eqref{Eq:logdist}, 
$d(\boldsymbol{\mu},\tilde{\boldsymbol{\mu}})
=\|\varphi(\boldsymbol{\mu})-\varphi(\tilde{\boldsymbol{\mu}})\|_\infty$.
Since $(\mathbb{R}^t,\|\cdot\|_\infty)$ is complete and $\varphi$ is 
bijective with continuous inverse $\varphi^{-1}(\mathbf{x})=\exp(\mathbf{x})$,
$(\mathcal{X},d)$ is complete. 
Consequently, the product space 
$(\mathcal{X}\times\mathcal{X}, d_{\max})$, with $d_{\max}\bigl((\bm{\mu}^{\mathcal{M}},\bm{\mu}^{\mathcal{T}}),
(\tilde{\bm{\mu}}^{\mathcal{M}},\tilde{\bm{\mu}}^{\mathcal{T}})\bigr)
\triangleq
\max\bigl\{d(\bm{\mu}^{\mathcal{M}},\tilde{\bm{\mu}}^{\mathcal{M}}),\,     d(\bm{\mu}^{\mathcal{T}},\tilde{\bm{\mu}}^{\mathcal{T}})\bigr\}$
is also complete.

Define the following constants from the fixed evidence
messages~$\boldsymbol{\mu}_{\mathcal{E}}$,
\begin{align*}
\kappa_{m,\min}
  &\triangleq\min_{i,\tau}
    \overline{\mu}_{\mathcal{E}}^{i,0,\tau}>0,&
\kappa_{m,\max}
  &\triangleq\max_{i,\tau}
    \overline{\mu}_{\mathcal{E}}^{i,0,\tau},\\
\kappa_{c,\min}
  &\triangleq\min_{j}
    \overline{\mu}_{\mathcal{E}}^{0,j}>0,&
\kappa_{c,\max}
  &\triangleq\max_{j}
    \overline{\mu}_{\mathcal{E}}^{0,j},\\
\kappa_{e,\max}
  &\triangleq\max_{i>0,j>0,\tau}
    \overline{\mu}_{\mathcal{E}}^{i,j,\tau}.
\end{align*}

Next, we show that for any initial
$(\boldsymbol{\mu}^{\mathcal{M},(0)},
\boldsymbol{\mu}^{\mathcal{T},(0)})
\in\mathcal{X}\times\mathcal{X}$,
the messages after the second iteration enter a
positively invariant compact subset
$\hat{\Omega}\subset\mathcal{X}\times\mathcal{X}$.

From~\eqref{Eq:muTk}--\eqref{Eq:muMk}, the bounds
$0<\overline{\mu}_{\mathcal{T}}^{i,j,\tau}
\le 1/\kappa_{m,\min}
\triangleq U_{\mathcal{T}}$
and
$0<\overline{\mu}_{\mathcal{M}}^{i,j,\tau}
\le 1/\kappa_{c,\min}
\triangleq U_{\mathcal{M}}$
ensure that, after the first synchronous update,
$\boldsymbol{\mu}^{\mathcal{T},(1)}
\in(0,U_{\mathcal{T}}]^t$ and
$\boldsymbol{\mu}^{\mathcal{M},(1)}
\in(0,U_{\mathcal{M}}]^t$.

At the second iteration, substituting
$\boldsymbol{\mu}^{\mathcal{M},(1)}
\in(0,U_{\mathcal{M}}]^t$ into~\eqref{Eq:muTk} yields
$c_{i,j,\tau}^{\mathcal{M}}
(\boldsymbol{\mu}^{\mathcal{M},(1)})
\le(\mathfrak{Y}_k-1)\,
\kappa_{e,\max}\,U_{\mathcal{M}}
/\kappa_{m,\min}
\triangleq\bar{\kappa}^{\mathcal{M}}$,
which implies
$\overline{\mu}_{\mathcal{T}}^{i,j,\tau}
\ge\frac{1}{\kappa_{m,\max}
(1+\bar{\kappa}^{\mathcal{M}})}
\triangleq L_{\mathcal{T}}>0$.
Hence,
$\boldsymbol{\mu}^{\mathcal{T},(2)}
\in\Omega_{\mathcal{T}}
\triangleq[L_{\mathcal{T}},U_{\mathcal{T}}]^t$.
Analogously, substituting
$\boldsymbol{\mu}^{\mathcal{T},(1)}
\in(0,U_{\mathcal{T}}]^t$ into~\eqref{Eq:muMk} yields
$c_{i,j,\tau}^{\mathcal{T}}
(\boldsymbol{\mu}^{\mathcal{T},(1)})
\le(\mathfrak{X}_k\mathfrak{M}_k-1)\,
\kappa_{e,\max}\,U_{\mathcal{T}}
/\kappa_{c,\min}
\triangleq\bar{\kappa}^{\mathcal{T}}$,
which implies
$\overline{\mu}_{\mathcal{M}}^{i,j,\tau}
\ge\frac{1}{\kappa_{c,\max}
(1+\bar{\kappa}^{\mathcal{T}})}
\triangleq L_{\mathcal{M}}>0$.
Hence,
$\boldsymbol{\mu}^{\mathcal{M},(2)}
\in\Omega_{\mathcal{M}}
\triangleq[L_{\mathcal{M}},U_{\mathcal{M}}]^t$.

Since $\Omega_{\mathcal M}=[L_{\mathcal M},U_{\mathcal M}]^t$ and $\Omega_{\mathcal T}=[L_{\mathcal T},U_{\mathcal T}]^t$, for any $\boldsymbol{\mu}^{\mathcal M}\in\Omega_{\mathcal M}$ and $\boldsymbol{\mu}^{\mathcal T}\in\Omega_{\mathcal T}$, we have
$c_{i,j,\tau}^{\mathcal M}(\boldsymbol{\mu}^{\mathcal M})\le \bar{\kappa}^{\mathcal M}$ and
$c_{i,j,\tau}^{\mathcal T}(\boldsymbol{\mu}^{\mathcal T})\le \bar{\kappa}^{\mathcal T}$. 
Hence, $g_{i,j,\tau}(\boldsymbol{\mu}^{\mathcal{M}})\in[L_{\mathcal{T}},U_{\mathcal{T}}]$ 
and $h_{i,j,\tau}(\boldsymbol{\mu}^{\mathcal{T}})\in[L_{\mathcal{M}},U_{\mathcal{M}}]$. 
Therefore, $\boldsymbol{g}(\Omega_{\mathcal{M}})\subseteq\Omega_{\mathcal{T}}$, 
$\boldsymbol{h}(\Omega_{\mathcal{T}})\subseteq\Omega_{\mathcal{M}}$.
Let $\hat{\Omega} \triangleq \Omega_{\mathcal{M}} \times \Omega_{\mathcal{T}}$, then we have $\boldsymbol{F}(\hat{\Omega}) \subseteq \hat{\Omega}$; that is, $\hat{\Omega}$ is positively invariant.
Since $(\boldsymbol{\mu}^{\mathcal{M},(2)},
\boldsymbol{\mu}^{\mathcal{T},(2)})\in\hat{\Omega}$, it follows that 
$(\boldsymbol{\mu}^{\mathcal{M},(n)},\boldsymbol{\mu}^{\mathcal{T},(n)})\in\hat{\Omega}$ 
for all $n\ge 2$.

By Proposition~\ref{prop:contraction}, where
$\bar{L}_{\mathcal{M}}\triangleq U_{\mathcal{M}}/L_{\mathcal{M}}>1$ and
$\bar{L}_{\mathcal{T}}\triangleq U_{\mathcal{T}}/L_{\mathcal{T}}>1$,
$\boldsymbol{g}$ and $\boldsymbol{h}$ satisfy \eqref{Eq:g_contraction} and
\eqref{Eq:h_contraction} on $\Omega_{\mathcal{M}}$ and $\Omega_{\mathcal{T}}$,
respectively. Therefore, for any
$(\boldsymbol{\mu}^{\mathcal{M}},\boldsymbol{\mu}^{\mathcal{T}}),
(\tilde{\boldsymbol{\mu}}^{\mathcal{M}},\tilde{\boldsymbol{\mu}}^{\mathcal{T}})
\in\hat{\Omega}$,
\begin{align}
&d_{\max}\bigl(\boldsymbol{F}(\boldsymbol{\mu}^{\mathcal{M}},\boldsymbol{\mu}^{\mathcal{T}}),
\boldsymbol{F}(\tilde{\boldsymbol{\mu}}^{\mathcal{M}},
\tilde{\boldsymbol{\mu}}^{\mathcal{T}})\bigr) \nonumber\\
&=\max\bigl\{d(\boldsymbol{h}(\boldsymbol{\mu}^{\mathcal{T}}),
\boldsymbol{h}(\tilde{\boldsymbol{\mu}}^{\mathcal{T}})),\,
d(\boldsymbol{g}(\boldsymbol{\mu}^{\mathcal{M}}),
\boldsymbol{g}(\tilde{\boldsymbol{\mu}}^{\mathcal{M}}))\bigr\} \nonumber\\
&\le\max\bigl\{\alpha(\bar{L}_{\mathcal{T}},C_{\mathcal{T}}^*)\,
d(\boldsymbol{\mu}^{\mathcal{T}},\tilde{\boldsymbol{\mu}}^{\mathcal{T}}),\,
\alpha(\bar{L}_{\mathcal{M}},C_{\mathcal{M}}^*)\,
d(\boldsymbol{\mu}^{\mathcal{M}},\tilde{\boldsymbol{\mu}}^{\mathcal{M}})\bigr\} \nonumber\\
&\le\alpha_{\mathrm{sync}}\,d_{\max}\bigl((\boldsymbol{\mu}^{\mathcal{M}},
\boldsymbol{\mu}^{\mathcal{T}}),(\tilde{\boldsymbol{\mu}}^{\mathcal{M}},
\tilde{\boldsymbol{\mu}}^{\mathcal{T}})\bigr),
\end{align}
where $\alpha_{\mathrm{sync}}\triangleq
\max\{\alpha(\bar{L}_{\mathcal{M}},C_{\mathcal{M}}^*),
\alpha(\bar{L}_{\mathcal{T}},C_{\mathcal{T}}^*)\}<1$.
Hence, $\boldsymbol{F}$ is a strict contraction on $\hat{\Omega}$.

Since $\hat{\Omega}$ is a compact subset of the complete
space $(\mathcal{X}\times\mathcal{X},d_{\max})$, it is
itself complete.
The Banach fixed-point theorem~\cite{Banach1922} therefore guarantees a unique
fixed point $(\boldsymbol{\mu}^{\mathcal{M},*},\boldsymbol{\mu}^{\mathcal{T},*})\in\hat{\Omega}$,
to which the iterates converge for any initialization $(\boldsymbol{\mu}^{\mathcal{M},(0)},\boldsymbol{\mu}^{\mathcal{T},(0)})
\in\mathcal{X}\times\mathcal{X}$.
The proof is complete.

\end{proof}

\textbf{Remark 1:}
Theorem~\ref{Theorem1} is also consistent with
existing theory in the single-path case
($\mathfrak{M}_k=1$), where the MPDA constraints
reduce to those of two-way DA and the marginal
posteriors in~\eqref{Eq:belief} coincide with those
in~\cite{Williams2014Approximate}. Specifically, by
identifying
$\psi_i(j)=
\mu_{\mathcal{E}}^{i,j,1}/
(\mu_{\mathcal{E}}^{i,0,1}\mu_{\mathcal{E}}^{0,j})$
and
$\psi_i(0)=1$,
BP yields identical approximate marginals. 

\section{Numerical Experiments}\label{Sec:Experiments}

While Section~\ref{section:multipath_convergence} has
established the theoretical convergence of BP for MPDA,
we now provide an empirical evaluation in an OTHR target
tracking scenario, where a single target may generate
multiple measurements through different ionospheric
propagation paths. 
Consistent with the 
convergence analysis, we restrict attention to the inner BP
procedure for MPDA inference under fixed evidence messages;
the outer VB loop of JDT-VB is not implemented in the
experiments.

\subsection{Target and Measurement Model}\label{Sec:model}

We consider an MTT scenario for OTHR, where targets follow a nearly constant velocity dynamic model formulated in ground coordinates~\cite{Pulford1998}.
The radar receiver is located at the origin, and the transmitter is placed at a distance $d_{\mathrm{tx}}$ along the \(x\)-axis. 
The target kinematic state at scan $k$ is denoted as ${x}_k = \begin{bmatrix} g_k & \dot{g}_k & \vartheta_k & \dot{\vartheta}_k \end{bmatrix}^\top$, where $g_k$ and $\dot{g}_k$ denote the ground range and ground range rate, while $\vartheta_k$ and $\dot{\vartheta}_k$ represent the bearing angle and its rate, respectively.

To account for multipath propagation, we adopt the well-established ionospheric reflection model detailed in~\cite{Pulford1998}. 
Assuming two dominant ionospheric layers (E and F), the transmitted signals yield four possible propagation modes, denoted by $\tau \in \{\text{E-E, E-F, F-E, F-F}\}$. At each scan, the OTHR receives the slant measurement vector $y_k = [r_k, \dot{r}_k, \zeta_k]^\top$, comprising the slant range, slant range-rate, and azimuth. 
The nonlinear coordinate transformations, along with the corresponding state transition and observation noise covariance matrices, follow~\cite{Pulford1998}.

\smallskip\noindent\textbf{Scenario parameters.}\;
The surveillance region is assumed to be \([1500,\,2000]\) km in range, and \([0.626,\,0.899]\) rad in azimuth.
The slant range-rate is assumed to be in $[-0.12,\,0.12]$ km/s.
Measurement errors for all propagation paths are modeled as zero-mean Gaussian with standard deviations \(\sigma_r = 5\)~km (slant range), \(\sigma_{\dot{r}} = 0.001\)~km/s (slant range rate), and \(\sigma_\zeta = 0.003\)~rad (azimuth).

Without loss of generality, consider that 
\(p_{\text{d}}^{\tau} = p_{\text{d}},\ \tau = 1,2,3,4\).
Clutter is modeled as a Poisson point process 
over the measurement space, with surveillance volume 
\(V_k = 500\ \mathrm{km} \times 0.24\ \mathrm{km/s} \times 0.273\ \mathrm{rad} \).
The number of clutter measurements \(N_{c,k}\) at each scan 
follows a Poisson distribution with mean 
\(\lambda_{c,k}\triangleq \mathbb{E}[N_{c,k}]=\lambda V_k\), i.e., 
\(N_{c,k} \sim \mathrm{Poisson}(\lambda_{c,k})\).
Equivalently, when \(\lambda_{c,k}\) is prescribed in the simulations, the corresponding clutter spatial density is \(\lambda=\lambda_{c,k}/V_k\).

We set $d_{\mathrm{tx}} = 100$~km, ionospheric layer heights $H_{\text{E}} = 100$~km (E-layer), and $H_{\text{F}} = 260$~km (F-layer).
In all experiments, the convergence
threshold is fixed at $\delta=10^{-5}$.

\subsection{Experimental Setup of Experiments}

We design four distinct experiments to evaluate the
marginal approximation accuracy, convergence behavior,
and tracking accuracy of BP.
Unless otherwise specified, the simulation comprises 100 time steps with a sampling interval of $T = 10$~s, and the targets are initially uniformly spaced on a circle of radius $\rho = 50$~km and move toward the center at a constant speed of $0.1$~km/s.
The targets intersect at the origin between time steps 40 and 60 before moving outward, as depicted in Fig.~\ref{fig:3TargetsPolar}.

\begin{figure}[!htp]
\centering
\includegraphics[width=0.3\textwidth]{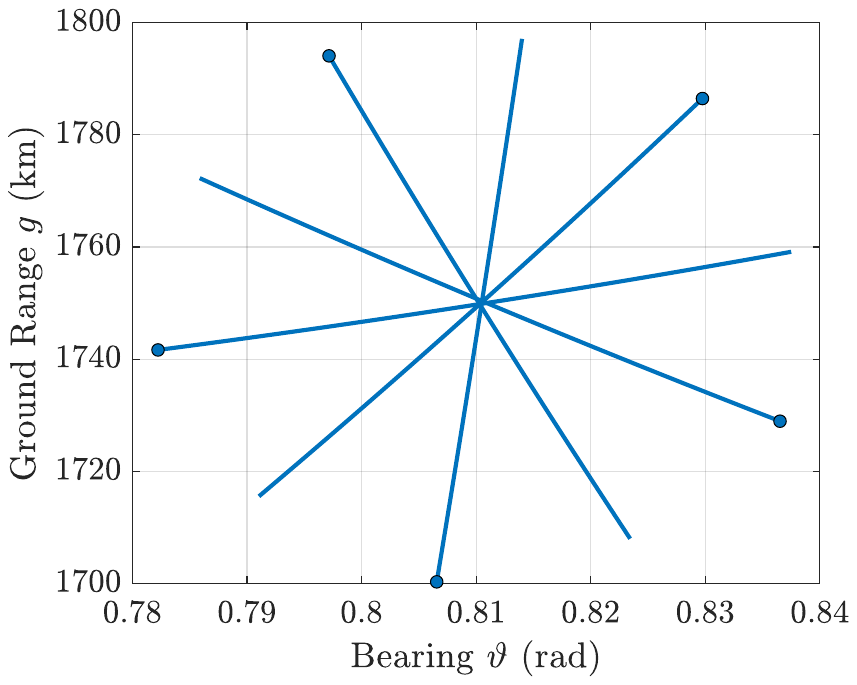}
\caption{True trajectories of five targets in the range-bearing plane over 100 time steps.}
\label{fig:3TargetsPolar}
\end{figure}

At each scan~$k$, an extended Kalman filter (EKF) based on the nonlinear OTHR measurement model in~\cite{Pulford1998} is
employed for target kinematic state estimation. The EKF
time-prediction step provides, for target~$i$, the predicted
state $\widehat{x}_{i,k}^{-}$ and covariance~$P_{i,k}^{-}$.
For each target--path pair~$(i,\tau)$, the predicted
measurement is
$\widehat{y}_{i,\tau,k}^{-}
=h_{\tau,k}(\widehat{x}_{i,k}^{-})$
with innovation covariance
$S_{i,\tau,k}
=H_{i,\tau,k}P_{i,k}^{-}H_{i,\tau,k}^{\mathsf{T}}
+R_{\tau,k}$,
where $H_{i,\tau,k}$ is the Jacobian of
$h_{\tau,k}(\cdot)$ at~$\widehat{x}_{i,k}^{-}$.
The corresponding predictive likelihood is
$\ell_{\tau,k}^{i,j}
\triangleq
\mathcal{N}(y_{j,k};
\widehat{y}_{i,\tau,k}^{-},S_{i,\tau,k})$.
Based on the predicted-measurement likelihood terms and the
detection and clutter models in the MPDA formulation
of~\cite{Pulford1998,Lan2020Joint}, the fixed evidence messages
supplied to Algorithm~\ref{alg1} are constructed as,
\begin{equation}\label{Eq:SimulationEvidenceMessages}
\overline{\mu}_{\mathcal{E}_k}^{i,j,\tau}
\triangleq
\frac{p_d^\tau \ell_{\tau,k}^{i,j}}{\lambda V_k},
\qquad
\overline{\mu}_{\mathcal{E}_k}^{i,0,\tau}
\triangleq
1-p_d^\tau,
\qquad
\overline{\mu}_{\mathcal{E}_k}^{0,j}
\triangleq
\frac{1}{V_k}.
\end{equation}
Algorithm~\ref{alg1} is then executed with
$\boldsymbol{\mu}_{\mathcal{E}_k}$ held fixed throughout
the inner BP iterations.
In Experiment~I, the converged 
BP beliefs are evaluated directly against exact marginal association probabilities 
in a single-scan inference problem. For the tracking experiments, the resulting 
association beliefs are used in the EKF-based state update.
No outer VB iteration or target existence-state update is
performed; hence, the experiments evaluate the inner BP
procedure under tracking-generated fixed evidence messages,
rather than the complete JDT-VB algorithm.
All results are averaged over $500$ Monte Carlo (MC) runs.

\subsubsection{Experiment~I}
To evaluate the accuracy of the converged BP beliefs,
we consider a single-scan MPDA inference problem with
$\mathfrak{X}_k=2$ targets and $\mathfrak{M}_k=2$ propagation
paths (F-E and F-F).
The scenario is kept small so that the exact marginal
association probabilities can be obtained by exhaustive
enumeration of all feasible MPDA events.
The two targets are uniformly spaced on a circle of radius
$\rho\in\{5,10,15,20,25\}$~km. Each target--path pair generates a measurement
independently with probability $p_{\mathrm{d}}\in\{0.6,0.9\}$, and clutter
measurements are drawn from a Poisson process with mean $\lambda_{c,k}=2$;
consequently, $\mathfrak{Y}_k$ varies across trials. The evidence messages
in~\eqref{Eq:SimulationEvidenceMessages} are constructed with
$\widehat{y}_{i,\tau,k}^{-}
=h_{\tau,k}(x_{i,k})$ and $S_{i,\tau,k}
=R_{\tau,k}$.
Following \cite{Williams2014Approximate}, the approximation
accuracy is evaluated by the average maximum error~(AME),
defined as the mean over MC trials of the largest
absolute difference between the BP belief and the exact marginal.

\subsubsection{Experiment II}
To examine the convergence of BP in a dense-target MTT scenario, we set the number of targets to $\mathfrak{X}_k = 100$ with four propagation paths. 
The detection probability is fixed at \(p_{\text{d}} = 0.9\), 
and the average number of clutter measurements per scan is set to 
\(\lambda_{c,k} = 90\).
We analyze the number of iterations required for the message 
residual $e$ in Algorithm~\ref{alg1} to satisfy the convergence 
criterion $e < \delta$.

\subsubsection{Experiment III}
To further evaluate the performance of BP relative to the MD-MHT method~\cite{MHT2013Sathyan}, we consider a scenario with $\mathfrak{X}_k = 3$ targets and two propagation paths (F-F and F-E). 
The simulations are conducted under various combinations of detection probability 
$p_{\text{d}} \in \{0.3, 0.6, 0.9\}$ and average number of clutter measurements
per scan $\lambda_{c,k} \in \{10, 20\}$.
Specifically, the MD-MHT implementation employs Murty's approximation within a track-oriented framework to enhance computational efficiency~\cite{Blackman2004MHT}.
We evaluate two variants of MD-MHT: one using single-scan association (denoted as MDMHT-1) and another using a two-scan sliding window association (denoted as MDMHT-2)~\cite{MHT2013Sathyan}.
Finally, we compare the tracking accuracy and average per-step execution time among the three methods.

\subsubsection{Experiment IV}
The final experiment investigates the impact of the number of targets and propagation paths on BP.
First, we fix the detection probability at $p_{\text{d}} = 0.9$, 
the average number of clutter measurements per scan at $\lambda_{c,k} = 20$, 
and the number of propagation paths to two (F-F and F-E), 
while varying the number of targets $\mathfrak{X}_k \in \{5, 10, 15, 20, 25, 30\}$.
Subsequently, we fix the number of targets at $\mathfrak{X}_k = 15$ and vary the number of propagation paths from 1 to 4 (sequentially adding the E-E, E-F, F-E, and F-F modes). Under these conditions, we evaluate the tracking accuracy of BP, the number of iterations required for convergence, and the average per-step execution time.

\subsubsection{Evaluation Metrics}
Across all MC trials in Experiments~II--IV,
we evaluate three key aspects: tracking accuracy via the average Optimal Subpattern Assignment (OSPA) distance~\cite{Schuhmacher2008OSPA}, algorithmic convergence via the average number of iterations required for BP message 
convergence (Avg. BP Iters), and computational efficiency via the average per-step execution 
time (Avg. Time) of each method.

\subsection{Results of Experiments}
The results of the four experiments are shown in Fig.~\ref{fig:mpda_bp_ame},
Fig.~\ref{fig:convergence_iterations}, Fig.~\ref{fig_BPMHT}, 
Table~\ref{tab:runtime_comparison}, and 
Table~\ref{tab:bp_single_column_compressed}.

\subsubsection{Experiment I}

Fig.~\ref{fig:mpda_bp_ame} shows the AME for both values of
$p_{\mathrm{d}}$ across the tested values of $\rho$.
At $\rho=5$~km, closely spaced targets induce strong association
ambiguity, resulting in the largest AME, particularly for
$p_{\mathrm{d}}=0.9$.
As $\rho$ increases, the association ambiguity decreases and
the AME falls rapidly, becoming numerically negligible for
$\rho\in\{20,25\}$~km.
These results confirm that the converged BP beliefs are highly
accurate under moderate or weak association ambiguity, whereas
noticeable approximation error may arise in the most ambiguous
case.

\begin{figure}[!htp]
\centering
\includegraphics[width=0.4\textwidth]{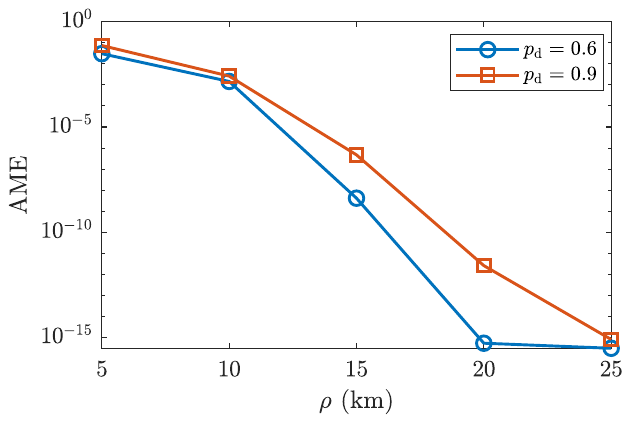}
\caption{AME of the converged BP beliefs relative to the exact marginal association probabilities under different values of $\rho$ and $p_d$.}
\label{fig:mpda_bp_ame}
\end{figure}

\subsubsection{Experiment II}
Fig.~\ref{fig:convergence_iterations} presents a histogram 
showing the number of BP iterations required to satisfy the 
convergence criterion $\|\boldsymbol{\mu}^{(n)} - 
\boldsymbol{\mu}^{(n-1)}\|_\infty < \delta$ across 500 
independent MC trials.
The horizontal axis denotes the 
time index $k$, the vertical axis represents the number of 
iterations, and the color intensity indicates the number of 
MC runs converging at each iteration count. 
As shown in Fig.~\ref{fig:convergence_iterations}, 
Algorithm~\ref{alg1} converges within 80 iterations across 
all time steps in all 500 MC trials, with the average number 
of iterations (black curve) remaining below 30, 
demonstrating that the number of BP iterations remains
moderate even in this dense scenario with $X_k=100$
and four propagation paths.

\begin{figure}[!htp]
\centering
\includegraphics[width=0.4\textwidth]{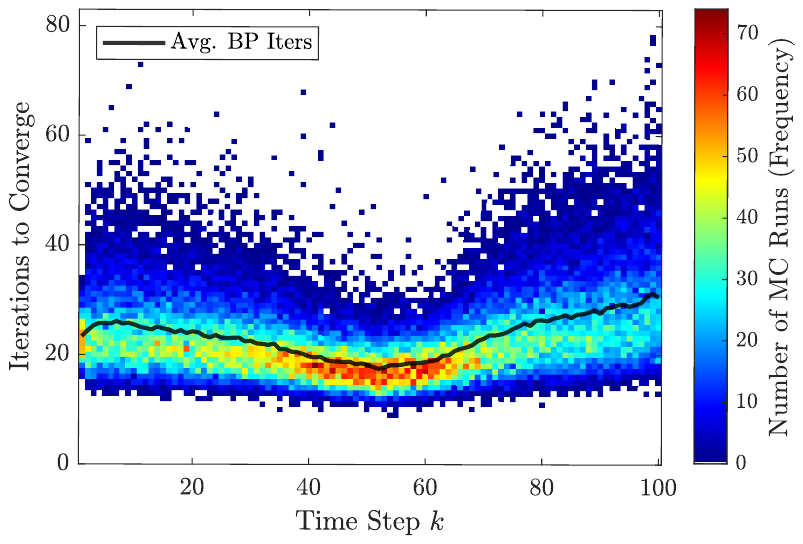}
\caption{
A 2D histogram showing the number of BP iterations 
required to satisfy the convergence criterion $e < \delta$ 
across 500 independent MC trials in a dense tracking scenario 
($\mathfrak{X}_k = 100$). Color intensity indicates the number 
of MC runs converging at each specific iteration count.}
\label{fig:convergence_iterations}
\end{figure}

\subsubsection{Experiment III}
Fig.~\ref{fig_BPMHT} compares the average OSPA distances of BP, MDMHT-1, and 
MDMHT-2~\cite{MHT2013Sathyan} under varying $p_{\text{d}}$ and 
average clutter number $\lambda_{c,k}$.
BP consistently achieves a lower average OSPA distance than both MD-MHT variants across all configurations.
As expected, lower $p_{\text{d}}$ or higher 
$\lambda_{c,k}$ increases the OSPA for the three methods.
Notably, a pronounced OSPA spike is observed around time step 50, corresponding to the moment 
when all targets intersect at the center, momentarily complicating the DA process.
Furthermore, Table~\ref{tab:runtime_comparison} summarizes the average per-step 
execution times. Although MDMHT-1 is marginally faster than BP, it 
comes at the cost of significantly degraded tracking accuracy. 
MDMHT-2 mitigates this accuracy loss via a multi-scan sliding window, but incurs higher computational cost and still falls short of BP in accuracy.
Consequently, BP achieves the most favorable accuracy-efficiency trade-off among 
all three methods.

\begin{figure}[!htp]
\centering
\includegraphics[width=0.45\textwidth]{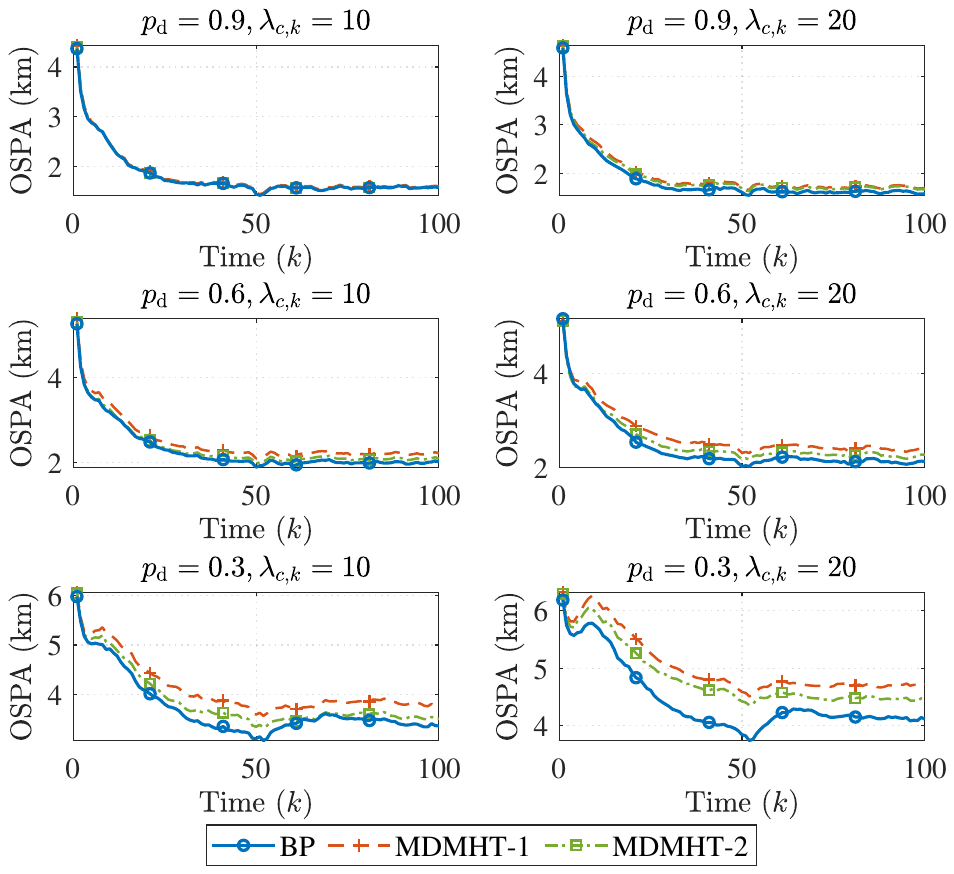}
\caption{Average OSPA of BP, MDMHT-1, and MDMHT-2 ($\mathfrak{X}_k = 3$) under different combinations of $p_{\text{d}}$ and $\lambda_{c,k}$.}
\label{fig_BPMHT}
\end{figure}

\begin{table}[htbp]
    \centering
    \caption{Average per-step execution time (ms) of each method 
    ($\mathfrak{X}_k = 3$, $\lambda_{c,k}^{(1)} = 10$, 
    $\lambda_{c,k}^{(2)} = 20$).}
    \label{tab:runtime_comparison}
    \setlength{\tabcolsep}{4.5pt}
    \footnotesize
    \begin{tabular}{l cc cc cc}
        \toprule
        \multirow{2}{*}{\textbf{Method}} 
        & \multicolumn{2}{c}{$p_{\text{d}}=0.9$} 
        & \multicolumn{2}{c}{$p_{\text{d}}=0.6$} 
        & \multicolumn{2}{c}{$p_{\text{d}}=0.3$} \\
        \cmidrule(lr){2-3} \cmidrule(lr){4-5} \cmidrule(lr){6-7}
        & $\lambda_{c,k}^{(1)}$ & $\lambda_{c,k}^{(2)}$ 
        & $\lambda_{c,k}^{(1)}$ & $\lambda_{c,k}^{(2)}$ 
        & $\lambda_{c,k}^{(1)}$ & $\lambda_{c,k}^{(2)}$ \\
        \midrule
        MDMHT-1 & 1.4 & 1.4 & 1.5 & 1.3 & 1.5 & 1.3 \\
        \textbf{BP} & 2.4 & 2.4 & 2.2 & 1.9 & 2.0 & 1.8 \\
        MDMHT-2 & 3.7 & 3.8 & 3.1 & 2.8 & 2.1 & 1.9 \\
        \bottomrule
    \end{tabular}
\end{table}

\subsubsection{Experiment IV}
Table~\ref{tab:bp_single_column_compressed} demonstrates the scalability of 
BP with respect to the number of targets~($\mathfrak{X}_k$) and propagation 
paths~($\mathfrak{M}_k$). 
Increasing $\mathfrak{X}_k$ moderately degrades tracking accuracy (higher OSPA) and increases both the number of BP iterations and the average per-step execution time, as additional targets and measurements enlarge the association problem and increase association ambiguity.
In contrast, increasing $\mathfrak{M}_k$ improves tracking accuracy by providing additional multipath information for target-state estimation, but at increased computational cost.
Specifically, additional paths increase the number of target--path combinations and measurements involved in DA, leading to more BP iterations and longer execution time.
As $\mathfrak{M}_k$ increases from 1 to 4, the average number of BP iterations increases from 54 to 94, while the average per-step execution time increases from 6.1~ms to 18.2~ms.
These results demonstrate the scalability of BP and its favorable trade-off between tracking accuracy and computational efficiency.

\begin{table}[htbp]
    \centering
    \caption{BP performance: OSPA, iterations, and per-step time under varying targets $\mathfrak{X}_k$ and paths $\mathfrak{M}_k$.}
    \label{tab:bp_single_column_compressed}
    \setlength{\tabcolsep}{2pt} 
    \begin{tabular}{l c c c c c c | c c c c}
        \toprule
        \multirow{2}{*}{\textbf{Metric}} & \multicolumn{6}{c|}{\textbf{Targets $\mathfrak{X}_k$}} & \multicolumn{4}{c}{\textbf{Paths $\mathfrak{M}_k$}} \\
        & 5 & 10 & 15 & 20 & 25 & 30 & 1 & 2 & 3 & 4 \\
        \midrule
        OSPA (km) & 1.84 & 1.93 & 2.19 & 2.34 & 2.51 & 2.73 & 2.78 & 2.22 & 1.76 & 1.63 \\
        Avg. BP Iters & 7 & 24 & 74 & 105 & 115 & 115 & 54 & 73 & 87 & 94 \\
        Time (ms) & 5.3 & 6.5 & 9.6 & 12.6 & 16.4 & 20.2 & 6.1 & 9.8 & 13.1 & 18.2 \\
        \bottomrule
    \end{tabular}
\end{table}

Overall, these results confirm the accuracy, convergence, and
scalability of BP in complex multipath MTT scenarios.

\section{Discussion}\label{sec:discussion}

So far, we have both theoretically proved and numerically demonstrated the convergence of BP for MPDA in MTT.
It is well known that BP has also been applied to extended object tracking~(EOT) for scalable DA, in which a single target may generate multiple measurements due to its spatial extent~\cite{Meyer2020Scalable}.
A natural question is then whether the convergence result developed in this work for MPDA can be applied to EOT by viewing the object extent as a kind of virtual multipath.
The answer is negative, for reasons explained below.

The measurement-generation mechanisms of MPDA and EOT are 
fundamentally different.
In EOT, multiple measurements arise from the continuous spatial extent of a single object under a unified observation model~\cite{Meyer2020Scalable}.
In MPDA, by contrast, a point target generates multiple measurements via multiple distinct propagation paths determined by the environment, and each path is governed by its own measurement function $h_{\tau,k}(\cdot)$.
Consequently, the feasible DA event spaces of MPDA and EOT 
are different, as formalized below.

In EOT, DA is described by binary variables 
$a_k^{i,j}\in\{0,1\}$, where $a_k^{i,j}=1$ indicates that 
measurement $j$ is generated by target $i$, reflecting a correspondence between targets and measurements.
Target $i$ may generate multiple measurements, subject to $\sum_j a_k^{i,j}\le l^{\max}$~\cite{Meyer2020Scalable}, 
where $l^{\max}$ is the maximum number of measurements 
that a single target can generate per scan.
For $\mathfrak{X}_k$ targets and $\mathfrak{Y}_k$ 
measurements, let $m_i\in[0,l^{\max}]$ denote the number 
of measurements assigned to target $i$.
Then the cardinality of the feasible EOT association-event space is
\begin{equation}\label{Eq:AkEOT}
|\mathcal{A}_{k}^{\text{EOT}}|
=
\sum_{\substack{0 \le m_i \le l^{\max} \\
\sum_i m_i \le \mathfrak{Y}_k}}
\frac{\mathfrak{Y}_k!}{m_1!\cdots m_{\mathfrak{X}_k}!\left(\mathfrak{Y}_k-\sum_{i=1}^{\mathfrak{X}_k}m_i\right)!}.
\end{equation}

In MPDA, however, the 
constraints~\eqref{Eq:FrameCon1}--\eqref{Eq:FrameCon2} 
enforce a three-way correspondence among targets, 
measurements, and propagation paths, treating each 
propagation path as distinct, so different permutations of path
labels correspond to different DA events.
For $\mathfrak{X}_k$ targets, $\mathfrak{M}_k$ paths, and 
$\mathfrak{Y}_k$ measurements, let $m_{i,\tau}\in\{0,1\}$ 
indicate whether path $\tau$ of target $i$ is assigned a 
measurement.
Then the cardinality of the feasible MPDA association-event space is
\begin{equation}\label{Eq:AkMPDA}
|\mathcal{A}_{k}^{\text{MPDA}}|
=
\sum_{\substack{m_{i,\tau}\in\{0,1\} \\
\sum_{i,\tau} m_{i,\tau} \le \mathfrak{Y}_k}}
\frac{\mathfrak{Y}_k!}{\left(\mathfrak{Y}_k-\sum_{i,\tau}m_{i,\tau}\right)!}.
\end{equation}

Hence, EOT and MPDA are defined over 
different feasible association-event spaces.
They coincide only in the degenerate case $\mathfrak{M}_k=1$ 
and $l^{\max}=1$, where both reduce to two-way DA.
This distinction is substantial rather than notational.
For example, when $\mathfrak{X}_k=2$, $\mathfrak{Y}_k=2$, 
$\mathfrak{M}_k=2$, and $l^{\max}=2$, \eqref{Eq:AkEOT} and 
\eqref{Eq:AkMPDA} give $|\mathcal{A}_{k}^{\mathrm{EOT}}|=9$ 
and $|\mathcal{A}_{k}^{\mathrm{MPDA}}|=21$, respectively.
Thus, even before BP is employed, the two models are defined 
over different association-event spaces and, in general, 
induce different association marginals and different 
posteriors.
This difference persists even when $\mathfrak{M}_k = l^{\max}$,
because in MPDA each path label is distinct, so permuting
path labels changes the DA event. Thus, assigning the same two measurements 
via paths $(\tau_1,\tau_2)$ and $(\tau_2,\tau_1)$ constitutes two different 
DA events, whereas in EOT only the subset of assigned measurements matters, 
making these two assignments the same DA event.
The reason is that EOT distinguishes only which subset of 
measurements is assigned to each target, whereas MPDA 
additionally distinguishes through which path each assigned 
measurement is received.

One may wonder whether EOT with at most $l^{\max}$ 
measurements per target can be rewritten as an MPDA 
problem by introducing $l^{\max}$ virtual paths per target.
This reduction fails for the following reasons.

Even if $l^{\max}$ virtual paths are introduced and share 
the same measurement function, they remain labeled paths in 
MPDA.
Consequently, a single EOT DA event in which target $i$ 
receives $m_i$ measurements from some subset is represented 
by $\frac{l^{\max}!}{(l^{\max}-m_i)!}$ distinct labeled 
path assignments, all corresponding to the same DA event.
The virtual-path model therefore over-counts EOT DA events 
by exactly this factor.

To compensate for this over-counting and recover the correct 
EOT posterior, each labeled path assignment must be assigned 
prior mass proportional to
\begin{equation}\label{Eq:SlotPrior}
\frac{(l^{\max}-m_i)!}{l^{\max}!}\,m_i!\,p(m_i\mid x_{i,k}).
\end{equation}
After summing over all $\frac{l^{\max}!}{(l^{\max}-m_i)!}$ 
equivalent labeled assignments, the total mass associated 
with a given measurement subset is proportional to 
$m_i!\,p(m_i\mid x_{i,k})$, which is precisely the target-wise 
counting weight used in the overcomplete EOT construction 
of~\cite{Meyer2020Scalable}.
This correction is mandatory --  omitting it inflates the 
posterior mass of every measurement subset by the 
over-counting factor, yielding a model that is no longer 
equivalent to EOT.

One might hope that a Poisson count model eliminates the 
need for this correction.
Under the Poisson assumption
$p(m_i\mid x_{i,k})=e^{-\gamma_i(x_{i,k})}\,\gamma_i(x_{i,k})^{m_i}/m_i!$,
where $\gamma_i(x_{i,k})>0$ denotes the expected
number of measurements generated by target~$i$,
giving $m_i!\,p(m_i\mid x_{i,k}) = e^{-\gamma_i(x_{i,k})}\,\gamma_i(x_{i,k})^{m_i}$.
However, this cancellation does not remove the 
path-permutation correction.
Substituting the Poisson assumption into~\eqref{Eq:SlotPrior} 
yields
\begin{equation}\label{Eq:PoissonSlotComp3}
\!\! \frac{(l^{\max}-m_i)!}{l^{\max}!}\,m_i!\,p(m_i\mid x_{i,k})
\!=\!
\frac{(l^{\max}-m_i)!}{l^{\max}!}\,e^{-\gamma_i}
\gamma_i^{m_i}.\!\!
\end{equation}
The residual factor $(l^{\max}-m_i)!\,/\,l^{\max}!$ survives 
because the virtual paths are exchangeable auxiliary labels 
rather than physical propagation paths, and no distributional 
assumption on $p(m_i\mid x_{i,k})$ can remove it.

Even granting the prior compensation above, the correction 
factor in~\eqref{Eq:SlotPrior} depends on the total count 
$m_i$ of measurements assigned to target $i$ across all its 
virtual paths and contains the counting correction 
proportional to $(l^{\max}-m_i)!\,m_i!$, thereby coupling 
all virtual paths of the same target.
Crucially, $(l^{\max}-m_i)!\,m_i!$ is a function of the 
aggregate count $m_i = \sum_{\tau} m_{i,\tau}$ across all 
virtual paths of target $i$ simultaneously; it cannot be 
decomposed into a product of per-path terms, and hence is 
non-separable across virtual paths.
This coupling is precisely the target-wise factor appearing 
in Eqs.~(31)--(34) of~\cite{Meyer2020Scalable}; see in 
particular Eq.~(33) of~\cite{Meyer2020Scalable}.
Because this factor is non-separable across virtual paths, 
the corrected virtual-path model yields the EOT BP message 
updates of~\cite{Meyer2020Scalable}, Eqs.~(36)--(45), 
rather than the MPDA BP message updates 
in~\eqref{Eq:muTk}--\eqref{Eq:muMk}, 
and therefore does not admit the MPDA factorization.

The cause of all incompatibilities above is the same 
fundamental difference.
In MPDA, each label indexes a distinct physical propagation 
path, so exchanging two labels changes the DA event itself.
In EOT, virtual path labels are exchangeable auxiliary 
variables, so different permutations of the occupied virtual 
paths represent the same DA event and require a target-wise 
prior correction to account for this over-counting, as given 
in~\eqref{Eq:SlotPrior}.

In summary, EOT is not a special case of 
MPDA.
Although EOT admits an exact overcomplete virtual-path 
reformulation, that reformulation still does not admit the MPDA factorization.
Hence, neither the MPDA loopy BP updates nor the 
convergence proof developed in this work extends directly 
to EOT, and a rigorous convergence analysis for loopy BP in 
EOT DA will be investigated in a companion article.

\section{Conclusion}\label{sec:conclusion}
We rigorously proved the convergence of BP for MPDA at each time step. 
Simulation results confirmed the accuracy of the converged
BP beliefs, the consistent convergence of the algorithm,
and a favorable accuracy--efficiency trade-off relative
to MD-MHT.

\bibliographystyle{ieeetr}  
\bibliography{autosam_TAES}

\end{document}